\newtheorem{theorem}{Theorem}
\newtheorem{lemma}{Lemma}
\newcommand{\BlackBox}{\rule{1.5ex}{1.5ex}}  
\newenvironment{proof}{\par\noindent{\bf Proof:\
}}{\hfill\BlackBox\\[2mm]}
\begin{document}



\title{Relation between 
information and disturbance 
in quantum key distribution protocol with classical Alice}

\author{Takayuki Miyadera}

\affiliation{Research Center for Information Security, 
\\
National Institute of 
Advanced Industrial Science and Technology, 
\\
1-1-1 Umezono, 
Tsukuba, Ibaraki 305-8561, Japan
\\
miyadera-takayuki@aist.go.jp}


\begin{abstract}
The ``semiquantum" key distribution protocol introduced by Zou {\em et al.} 
is examined. 
The protocol while using two-way quantum communication 
requires only Bob to be fully quantum.  
We derive a trade-off inequality between  
information gained by Eve and the disturbance observed by legitimate users. 
It guarantees that Eve cannot obtain  
large information if the disturbance 
is sufficiently small. 
\end{abstract}


\maketitle
\section{Introduction}
Recently, several ``semiquantum" key distribution protocols 
were proposed\cite{BKM,Semi,Zou}.  
In contrast to the common quantum key distribution protocols such as BB84, 
one of the parties in these protocols uses only classical operations.
In their pioneering work, 
Boyer, Kenigsberg and Mor introduced\cite{BKM} a two-way 
semiquantum key distribution 
 protocol using four states. 
Zou, Qiu, Li, Wu, and Li 
derived\cite{Zou} its simplification that requires only one state. 
These protocols are interesting, because they give insights into the 
necessary conditions for achieving secure communication. 
In return for the merit that the protocols need only one quantum party, they 
use two-way quantum communication channels. 
This makes the security proof difficult. 
In fact, only 
the robustness of the protocols has been proved so far\cite{BKM,Semi,BoyerMor,ZQ,photon}. 
The robustness of the protocols suggests  
that
information gained by Eve inevitably 
disturbs the communication between Alice and Bob. 
While this robustness is necessary for the security of the protocols,  
as the no-cloning theorem was in the BB84 protocol, 
the next important step should be taken for showing a quantitative 
trade-off relationship between the information gained by Eve 
and the disturbance observed by legitimate users\cite{Biham,Hayashi}. 
This type of relationship in the BB84 protocol is called the  
information-disturbance theorem\cite{Boykin,Miyainfo}. 
\par
In this paper, we derive such a trade-off relationship in  
the protocol introduced by Zou {\em et al.}\cite{Zou}. In this protocol, 
the existence of Eve is noticed by performing 
two error-checking procedures. The inequality we derive 
relates the amount of information gained by Eve to these 
error probabilities.  
\section{Formulation and results}
\subsection{Formulation}
The protocol given by Zou {\em et al.}\cite{Zou} runs as follows. 
Bob sends Alice $N$ qubits 
each in the state $|+\rangle:=\frac{1}{\sqrt{2}}(|0\rangle +|1\rangle)$ 
and keeps all qubits he receives back from her in a quantum memory. 
After confirming the receipt of all qubits by Bob, 
Alice publicly announces  
which qubits she reflected (without disturbing them); 
Bob then checks that he received $|+\rangle$ and not $|-\rangle
:=\frac{1}{\sqrt{2}}(|0\rangle -|1\rangle)
$ on 
those positions (CTRL). For the (SIFT) qubits measured by Alice in the 
standard (classical) $\{|0\rangle, |1\rangle\}$ basis, a sample is chosen to be checked 
for errors (TEST). The remaining SIFT bits 
serve for obtaining a final key via error correction and 
privacy amplification. 
\par
Instead of this full protocol, we treat its toy version using a qubit 
without the public discussion. 
This protocol including Eve's attack is described as follows. 
We consider two situations: CTRL and SIFT. 
In both situations, Bob first sends a qubit to Alice 
in the state $|+\rangle \in {\cal H}:={\bf C}^2$. 
Eve makes the qubit interact with her apparatus ${\cal K}$ by 
a unitary operation $V:{\cal H} \otimes {\cal K} \to {\cal H}\otimes {\cal K}$.  
The whole state evolves into 
\begin{eqnarray*}
|\Psi\rangle:=V|+\rangle \otimes |\Omega\rangle,
\end{eqnarray*}
 where $|\Omega\rangle$ denotes 
the initial state of Eve's apparatus.
\par
In the case of CTRL, Alice reflects the qubit without disturbing it. 
Eve again makes the qubit sent from Alice to Bob interact with her apparatus. 
It is described by a unitary operation $U:{\cal H}\otimes {\cal K} 
\to {\cal H} \otimes {\cal K}$. 
The whole state after the interaction is thus described 
as $U|\Psi\rangle=UV|+\rangle \otimes |\Omega\rangle$. 
Bob measures a projection-valued measure (PVM) $X=
\{X_{+},X_-\}:=
\{|+\rangle \langle +|\otimes {\bf 1}_{\cal K}, |-\rangle \langle -|\otimes {\bf 1}_{\cal K}\}$ 
to check whether the state is in $|+\rangle$. 
We define $P_{CTRL}$ by 
$P_{CTRL}:=\langle \Psi| U^*X_- U|\Psi\rangle$, 
which is an error probability in CTRL. 
\par
In the case of SIFT, after receiving a qubit, Alice measures a 
PVM $Z=\{Z_0,Z_1\}:=
\{|0\rangle \langle 0|\otimes {\bf 1}_{\cal K}, 
|1\rangle \langle 1|\otimes {\bf 1}_{\cal K} \}$. 
The probability for obtaining $z\in \{0,1\}$ is 
calculated as $p_{SIFT}^A(z):=\langle \Psi |Z_z |\Psi\rangle$. 
The state after the measurement is changed according to the von Neumann-L\"uders postulate. 
If $z$ is obtained, the whole state becomes 
\begin{eqnarray*}
\sigma_z:=\frac{Z_z |\Psi\rangle \langle \Psi |
Z_z }
{p_{SIFT}^A(z)}.
\end{eqnarray*} 
Alice sends the qubit back to Bob. 
Also in this case, Eve makes the qubit interact with her apparatus by using $U$. 
The whole system thus becomes 
$U\sigma_z U^*$. After receiving the qubit, Bob checks the state by 
measuring $Z$. The (conditional) probability for obtaining $z'\in \{0,1\}$ 
when Alice's outcome is $z$ is represented as 
$p^{B|A}_{SIFT}(z'|z):=\mbox{tr}(U \sigma_z U^* Z_{z'})$. 
Using these quantities, we define an error probability in SIFT by 
$P_{SIFT}:=p^{B|A}_{SIFT}(1|0)p^A_{SIFT}(0)+p^{B|A}_{SIFT}(0|1)p^A_{SIFT}(1)$. 
This quantity is represented as 
$P_{SIFT}=\langle \Psi |Z_0 U^* Z_1 U Z_0|\Psi\rangle + \langle \Psi | Z_1 U^* Z_0 U Z_1 |\Psi\rangle$. 
Eve's purpose is to know the outcome obtained by Alice. 
Let us denote the state possessed by Eve after the two-way 
quantum communication when Alice 
obtains $z$ 
in SIFT by $\rho_z$. 
It is represented as 
\begin{eqnarray*}
\rho_z:=\mbox{tr}_{{\cal H}}(U\sigma_z U^*),
\end{eqnarray*}
where $\mbox{tr}_{{\cal H}}$ is the partial trace 
over ${\cal H}$. 
Eve measures a positive-operator-valued measure (POVM) $E=\{E_e\}$ which acts only on ${\cal K}$ for extracting information.  
That is, each $E_e$ can be represented as $E_e={\bf 1}_{\cal H} \otimes \hat{E_e}$ 
by using some $\hat{E_e}$. We denote by 
$p^{E|A}_{SIFT}(e|z)$ the probability for obtaining an outcome $e$ when Alice 
obtains $z$. It is represented as
$p^{E|A}_{SIFT}(e|z)=\mbox{tr}(\rho_z E_e)$. 
We denote by $p^{AE}_{SIFT}(z,e)$ the joint probability representing 
the case that Alice obtains $z$ and Eve obtains $e$. 
This quantity is calculated as 
$p^{AE}_{SIFT}(z,e)=p^{E|A}_{SIFT}(e|z)p^A_{SIFT}(z)
=\langle \Psi |Z_z U^* E_e U Z_z|\Psi\rangle$. 
In addition, 
the probability for obtaining $e$ is calculated as 
$p^E_{SIFT}(e):=\sum_z p^{E|A}_{SIFT}(e|z)p^A_{SIFT}(z)$. 
The information gained by Eve is characterized by the 
mutual information, which is defined by 
\begin{eqnarray*}
I(A:E)=H(A)+H(E)-H(A,E),
\end{eqnarray*} 
where $H(A):=- \sum_z p^A_{SIFT}(z) \log_2 p^A_{SIFT}(z)$, 
$H(E):=- \sum_e p^E_{SIFT}(e) \log_2 p^E_{SIFT}(e)$ 
and 
$H(A,E):=-\sum_{z,e}p^{AE}_{SIFT}(z,e)
\log_2 p^{AE}(z,e)$. 
Eve has two chances to make her apparatus interact 
with the qubit. It is obvious that each interaction 
can help her obtain information. For instance, Eve can 
have an entangled state between her apparatus and 
the qubit sent to Alice by using $V$. Although it brings her 
information, this interaction leaves its trace behind by 
disturbing the state. 
Our aim in this paper is to derive a trade-off inequality 
that bounds $I(A:E)$ by $P_{CTRL}$ and $P_{SIFT}$ for 
general attacks in which both $U$ and $V$ are arbitrary. 
\subsection{Relation between information and disturbance}
The following is our main theorem. 
\begin{theorem}
The information gained by Eve can be bounded from above as
\begin{eqnarray*}
I(A:E)&\leq& 
2 \sqrt{P_{CTRL}+6P_{SIFT}^{1/4}
}, 
\end{eqnarray*}
where $P_{SIFT}$ and $P_{CTRL}$ are the error probabilities defined above. 
\end{theorem}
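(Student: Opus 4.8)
The plan is to control $I(A:E)$ through the indistinguishability of Eve's two conditional states $\rho_0,\rho_1$ and then to estimate that indistinguishability from the two error conditions. Since $A$ is a classical bit, a Holevo‑bound estimate gives an inequality of the form $I(A:E)\le\sqrt{2\,(1-F(\rho_0,\rho_1))}$, where $F$ is the fidelity: the Holevo quantity of a pair of pure states with overlap $F$ is at most $H_2(\tfrac{1+F}{2})$, which one checks is $\le\sqrt{2(1-F)}$, and the case of the (only near‑pure) mixed states occurring here follows by continuity of the von Neumann entropy. It therefore suffices to prove the fidelity lower bound $F(\rho_0,\rho_1)\ge 1-2P_{CTRL}-2\sqrt2\,P_{SIFT}^{1/2}-P_{SIFT}$: then $2(1-F)\le 4P_{CTRL}+4\sqrt2\,P_{SIFT}^{1/2}+2P_{SIFT}\le 4P_{CTRL}+24\,P_{SIFT}^{1/4}$ (since $P_{SIFT}\le1$ forces $P_{SIFT}^{1/2},P_{SIFT}\le P_{SIFT}^{1/4}$), and hence $I(A:E)\le\sqrt{2(1-F)}\le 2\sqrt{P_{CTRL}+6P_{SIFT}^{1/4}}$; whenever this right‑hand side is $\ge1$ there is nothing to prove, since $I(A:E)\le H(A)\le1$.

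To reach the fidelity bound I would first choose coordinates. Put $p_z:=p^A_{SIFT}(z)$ and write $|\Psi\rangle=\sqrt{p_0}\,|0\rangle\otimes|\hat e_0\rangle+\sqrt{p_1}\,|1\rangle\otimes|\hat e_1\rangle$ with $|\hat e_z\rangle\in{\cal K}$ unit (the Schmidt‑type decomposition of $|\Psi\rangle$ in the $\{|0\rangle,|1\rangle\}$ basis of ${\cal H}$). Then $\sigma_z=|z\rangle\langle z|\otimes|\hat e_z\rangle\langle\hat e_z|$, so the unit vectors $|\Phi_z\rangle:=U(|z\rangle\otimes|\hat e_z\rangle)$ are purifications of $\rho_z=\mbox{tr}_{\cal H}|\Phi_z\rangle\langle\Phi_z|$, and $U|\Psi\rangle=\sqrt{p_0}\,|\Phi_0\rangle+\sqrt{p_1}\,|\Phi_1\rangle$. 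Decomposing in the same basis, $|\Phi_0\rangle=|0\rangle|g_0\rangle+|1\rangle|b_0\rangle$ and $|\Phi_1\rangle=|0\rangle|b_1\rangle+|1\rangle|g_1\rangle$, so that $P_{SIFT}=p_0\|b_0\|^2+p_1\|b_1\|^2$. Finally write $U|\Psi\rangle=|+\rangle|\phi\rangle+|-\rangle|\psi_-\rangle$ in the $\{|+\rangle,|-\rangle\}$ basis of ${\cal H}$; then $P_{CTRL}=\|\psi_-\|^2$ and $\|\phi\|^2=1-P_{CTRL}$.

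The heart of the argument is a short computation comparing these two expansions of $U|\Psi\rangle$. Applying $|0\rangle\langle0|\otimes{\bf 1}_{\cal K}$ and $|1\rangle\langle1|\otimes{\bf 1}_{\cal K}$ to the identity $\sqrt{p_0}|\Phi_0\rangle+\sqrt{p_1}|\Phi_1\rangle=|+\rangle|\phi\rangle+|-\rangle|\psi_-\rangle$ gives the two exact relations $\sqrt{p_0}|g_0\rangle+\sqrt{p_1}|b_1\rangle=\tfrac1{\sqrt2}(|\phi\rangle+|\psi_-\rangle)$ and $\sqrt{p_0}|b_0\rangle+\sqrt{p_1}|g_1\rangle=\tfrac1{\sqrt2}(|\phi\rangle-|\psi_-\rangle)$. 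Forming the inner product of the two left‑hand sides and discarding, by Cauchy--Schwarz and AM--GM, all terms containing $b_0$ or $b_1$ (their total modulus is at most $\sqrt{2P_{SIFT}}+\tfrac12P_{SIFT}$), the leading part is $\tfrac12\langle\phi+\psi_-|\phi-\psi_-\rangle=\tfrac12(\|\phi\|^2-\|\psi_-\|^2)+i\,\mathrm{Im}\langle\psi_-|\phi\rangle=\tfrac12(1-2P_{CTRL})+i\,\mathrm{Im}\langle\psi_-|\phi\rangle$ — the crucial point being that the real part of the cross term $\langle\phi|\psi_-\rangle$ cancels, because $|\psi_-\rangle$ enters the expressions for $|g_0\rangle$ and for $|g_1\rangle$ with opposite signs; this is what makes $P_{CTRL}$ enter only linearly. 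Retaining just the real part, $\sqrt{p_0p_1}\,|\langle g_0|g_1\rangle|\ge\tfrac12(1-2P_{CTRL})-\sqrt{2P_{SIFT}}-\tfrac12P_{SIFT}$, and since $\sqrt{p_0p_1}\le\tfrac12$ this yields $|\langle g_0|g_1\rangle|\ge 1-2P_{CTRL}-2\sqrt2\,P_{SIFT}^{1/2}-P_{SIFT}$. Uhlmann's theorem, applied to the purifications $|\Phi_z\rangle$ and optimized over unitaries on the purifying qubit ${\cal H}$ (an off‑diagonal one with freely chosen phases suffices), then gives $F(\rho_0,\rho_1)\ge|\langle g_0|g_1\rangle|+|\langle b_0|b_1\rangle|\ge|\langle g_0|g_1\rangle|$, which is precisely the bound required in the first step.

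The step I expect to be the real obstacle is that first one: pinning down the precise information‑versus‑fidelity inequality $I(A:E)\le\sqrt{2(1-F(\rho_0,\rho_1))}$ (or with whatever explicit constant is actually needed) for the near‑pure but genuinely mixed states $\rho_0,\rho_1$ and a possibly biased marginal $(p_0,p_1)$; here one must carefully handle the entropies $S(\rho_z)$ in the Holevo bound via a Fannes‑type continuity estimate, and the mixedness $\|b_z\|^2$ (of size $O(P_{SIFT})$ after the coordinate analysis) re‑enters. By comparison, the coordinate computation and the cosmetic passage from $P_{SIFT}^{1/2}$ to $P_{SIFT}^{1/4}$ in the statement are routine.
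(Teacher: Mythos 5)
Your coordinate computation is correct and is the nicest part of the proposal: the two exact relations obtained by projecting $U|\Psi\rangle=\sqrt{p_0}|\Phi_0\rangle+\sqrt{p_1}|\Phi_1\rangle=|+\rangle|\phi\rangle+|-\rangle|\psi_-\rangle$ onto $|0\rangle$ and $|1\rangle$, the cancellation of $\mathrm{Re}\langle\phi|\psi_-\rangle$, and the resulting estimate $\mathrm{Re}\bigl[\sqrt{p_0p_1}\,\langle g_0|g_1\rangle\bigr]\ge \tfrac12(1-2P_{CTRL})-\sqrt{2P_{SIFT}}-\tfrac12 P_{SIFT}$, together with the Uhlmann step $F(\rho_0,\rho_1)\ge|\langle g_0|g_1\rangle|+|\langle b_0|b_1\rangle|$, all check out. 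The genuine gap is exactly the step you flag: the prior-independent inequality $I(A:E)\le\sqrt{2\,(1-F(\rho_0,\rho_1))}$ is not established. The Holevo-quantity formula you invoke is a pure-state formula; for the mixed $\rho_z$ there is no off-the-shelf bound $\chi\le H_2(\tfrac{1+F}{2})$, and ``continuity of the von Neumann entropy'' does not obviously rescue it with the constant $\sqrt2$ intact: Fannes/Alicki--Fannes corrections add terms that must be fitted inside the constant $6$, and your premise that the mixedness is $O(P_{SIFT})$ is false in general, since $\|b_z\|^2\le P_{SIFT}/p_z$ can be large when $p_z$ is small. That biased-prior regime is precisely the one your reduction must cover, because by dividing out $\sqrt{p_0p_1}\le\tfrac12$ you threw away the information that the CTRL test also forces the priors toward $\tfrac12$, and a fidelity-only, prior-free bound on $I(A:E)$ is exactly the delicate point. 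So as written the load-bearing first step is a conjecture, not a proof.

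The repair is short and keeps everything else you did: do not divide by $\sqrt{p_0p_1}$. Your estimate gives $\sqrt{p_0p_1}\,F(\rho_0,\rho_1)\ge\tfrac12-P_{CTRL}-\sqrt{2P_{SIFT}}-\tfrac12P_{SIFT}$ directly. By monotonicity of fidelity under Eve's measurement, $\sum_e\sqrt{p^{E|A}_{SIFT}(e|0)\,p^{E|A}_{SIFT}(e|1)}\ge F(\rho_0,\rho_1)$, hence $\sum_e\sqrt{p^{AE}_{SIFT}(0,e)\,p^{AE}_{SIFT}(1,e)}\ge\tfrac12-P_{CTRL}-\sqrt{2P_{SIFT}}-\tfrac12P_{SIFT}$, and the classical Fuchs--van de Graaf-type bound $I(A:E)\le\bigl(1-4(\sum_e\sqrt{p^{AE}(0,e)p^{AE}(1,e)})^2\bigr)^{1/2}$ (the paper's Lemma 1) then yields $I(A:E)\le 2\sqrt{P_{CTRL}+2P_{SIFT}^{1/4}}$, which is even slightly stronger than the stated theorem; no Holevo quantity, entropy continuity, or near-purity is needed. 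Note that this repaired argument is structurally the paper's proof: its Lemma 2 is the measured-fidelity (monotonicity) statement in disguise, and its algebra with the operators $C_z$ plays the role of your coordinate computation, which is arguably more transparent.
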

This theorem generalizes the robustness result. 
In fact, if we put $P_{CTRL}=P_{SIFT}=0$ in the above inequality, 
$I(A:E)=0$ follows. That is, information gained by Eve inevitably 
causes disturbance. 
Moreover, the theorem guarantees that information gained by Eve is 
small if both of the probabilities 
$P_{CTRL}$ and $P_{SIFT}$ 
are sufficiently small. 
\par
We employ two lemmas to prove our main theorem. 
The following lemma is employed to bound the mutual information 
by a quantity that is easier to 
treat. 
\begin{lemma}\label{lemma1}
Let $X$ and $Y$ be random variables. 
Suppose that $X$ takes a value in $\{0,1\}$. Denote by $p^{XY}(x,y)$ the  
joint probability representing the case that $X$ takes $x$ and $Y$ takes $y$. 
The mutual information between $X$ and $Y$ is bounded as 
\begin{eqnarray*}
I(X:Y)\leq \sqrt{
1-4 \left( \sum_y p^{XY}(0,y)^{1/2}p^{XY}(1,y)^{1/2}\right)^2
}.
\end{eqnarray*}
\end{lemma}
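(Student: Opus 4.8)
The plan is to bound the mutual information $I(X:Y)$ by relating it to a fidelity-type overlap $\sum_y p^{XY}(0,y)^{1/2}p^{XY}(1,y)^{1/2}$, which measures how indistinguishable the two conditional distributions $p(y\mid X=0)$ and $p(y\mid X=1)$ are. The natural route is to pass through the \emph{total variation distance} (or equivalently a trace distance) between these conditional distributions, since mutual information for a binary $X$ can be controlled by how far apart the conditionals are, and the total variation distance in turn is controlled by the Bhattacharyya-type overlap appearing in the statement.

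First I would write $I(X:Y)=H(X)-H(X\mid Y)$ and, since $X$ is binary, use a bound of the form $I(X:Y)\le \sqrt{1-F^2}$ where $F$ is some fidelity/overlap quantity. Concretely, let $q_x(y):=p^{XY}(x,y)/p^X(x)$ be the conditional law of $Y$ given $X=x$, and set $F:=\sum_y q_0(y)^{1/2}q_1(y)^{1/2}$, the classical fidelity. The key chain of inequalities I would assemble is: (i) $I(X:Y)$ is bounded by (a function of) the total variation distance $d:=\tfrac12\sum_y|q_0(y)-q_1(y)|$ between the conditionals — this is a Pinsker/Fano-type step specialized to binary $X$, giving roughly $I(X:Y)\le d$ or $I(X:Y)\le\sqrt{d}$-type control; (ii) the total variation distance satisfies $d\le\sqrt{1-F^2}$, a standard fidelity–trace-distance inequality in the classical (commutative) setting. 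Combining these yields $I(X:Y)\le\sqrt{1-F^2}$.

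The final cosmetic step is to reconcile $F=\sum_y q_0(y)^{1/2}q_1(y)^{1/2}$ with the unnormalized expression $G:=\sum_y p^{XY}(0,y)^{1/2}p^{XY}(1,y)^{1/2}=\sqrt{p^X(0)p^X(1)}\,F$ in the lemma. Since $p^X(0)p^X(1)\le 1/4$, we have $G\le \tfrac12 F$, hence $F\ge 2G$, so $1-F^2\le 1-4G^2$, and the bound $I(X:Y)\le\sqrt{1-4G^2}$ follows — with the factor $4$ and the use of $2G$ in place of $F$ exactly accounting for the normalization, and the worst case $p^X(0)=p^X(1)=1/2$ being where the inequality is tight (there $H(X)=1$ as well, which is consistent with the RHS reaching $1$).

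The main obstacle I expect is pinning down step (i) with the right constants: one wants a clean inequality $I(X:Y)\le$ (something) $\le\sqrt{1-F^2}$ that does \emph{not} lose a factor that would break the stated form. A direct Pinsker bound gives $I(X:Y)\le\min\{H(X),\, \text{something in terms of KL}\}$, but to land exactly on $\sqrt{1-F^2}$ one likely argues: $I(X:Y)=\sum_x p^X(x)D(q_x\|\bar q)$ where $\bar q=\sum_x p^X(x)q_x$, bound each relative entropy, and then invoke convexity/concavity to replace the KL terms by the single fidelity overlap, using that for binary $X$ the mixture $\bar q$ sits "between" $q_0$ and $q_1$. The slickest version may instead bound $H(X)-H(X\mid Y)$ by observing $H(X\mid Y=y)$ is large precisely when $p(x\mid y)$ is near $1/2$, i.e. when $p^{XY}(0,y)\approx p^{XY}(1,y)$, which is again governed by the geometric-mean terms; making that heuristic into the exact stated inequality is the delicate point, and I would check it at the extreme cases $F=1$ (independence, both sides $0$) and $F=0$ (disjoint supports, RHS $=1$, and indeed $I\le H(X)\le 1$).
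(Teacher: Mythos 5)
Your overall architecture --- bound $I(X:Y)$ by an $L_1$/total-variation type quantity and then convert it to the Bhattacharyya overlap --- is in the same spirit as the paper's proof (which adapts Theorem 1 of Fuchs and van de Graaf), and your steps (ii) and (iii) are fine as stated: $d\le\sqrt{1-F^2}$ is the classical fidelity--trace-distance bound, and $G=\sqrt{p^X(0)p^X(1)}\,F\le F/2$ correctly accounts for the normalization and the factor $4$. The genuine gap is step (i), which is the load-bearing step and is left unresolved: you need exactly $I(X:Y)\le d$ with $d$ the total variation distance between the conditionals and with constant $1$ (in bits); you yourself hedge between ``$I\le d$'' and ``$I\le\sqrt{d}$-type control'', and only the former works, since $I\le\sqrt{d}$ combined with (ii) yields the strictly weaker bound $(1-F^2)^{1/4}$. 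Moreover, the tools you name do not deliver it: Pinsker's inequality goes in the wrong direction (it upper-bounds total variation by relative entropy, not mutual information by total variation), and the ``bound each $D(q_x\Vert\bar q)$'' idea does not go through pointwise for non-uniform priors (the pointwise inequality behind it fails because $t\log_2(1/t)$ exceeds $1/2$ near $t=1/e$), so a further idea is genuinely required there.

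Two ways to close the gap. The paper's route is more elementary and never passes through normalized conditionals: it uses $H(X)\le 1$ together with the chord bound $h(t)\ge 2\min(t,1-t)$ applied to the posterior $p^{X|Y}(\cdot|y)$, which gives $I(X:Y)\le\sum_y\left|p^{XY}(0,y)-p^{XY}(1,y)\right|$ directly in terms of the \emph{joint} probabilities; a single Cauchy--Schwarz applied to $\sum_y\left|\sqrt{p^{XY}(0,y)}-\sqrt{p^{XY}(1,y)}\right|\left|\sqrt{p^{XY}(0,y)}+\sqrt{p^{XY}(1,y)}\right|$ then yields $\sqrt{(1-2G)(1+2G)}=\sqrt{1-4G^2}$, with no fidelity--trace-distance inequality needed. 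Alternatively, your step (i) is in fact true and can be proved by a maximal-coupling (erasure-channel) argument: write $q_x=(1-d)\mu+d\,\tilde q_x$ with $d=d_{TV}(q_0,q_1)$, introduce an erasure flag $B$ independent of $X$, and use $I(X:Y)\le I(X:Y,B)=d\,I(X:Y|B=1)\le d\,H(X)\le d$; with that established, your chain (i)--(iii) does give the stated lemma. As written, however, the proposal is not yet a proof, because its central inequality is neither pinned down nor established.
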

\begin{proof}
The proof is the slightest modification of Theorem 1 in Ref.\cite{Fuchs}. 
Let us denote by $p^X(x)$ and $p^Y(y)$ the marginal probabilities 
with their apparent notations, 
and by 
$p^{X|Y}(x|y)$ the conditional probability 
defined by $p^{X|Y}(x|y)=\frac{p^{XY}(x,y)}{p^{Y}(y)}$. 
The mutual information can be written as 
\begin{eqnarray*}
I(X:Y)=H(X)-H(X|Y), 
\end{eqnarray*}
where 
$
H(X):=-\sum_x p^{X}(x)\log_2 p^X(x)
$
and 
$
H(X|Y):=-\sum_y p^{Y}(y)\sum_x p^{X|Y}(x|y)\log_2 p^{X|Y}(x|y). 
$
Because $H(X)\leq 1$ and  
$-\sum_x p^{X|Y}(x|y) \log_2 p^{X|Y}(x|y) 
\geq 2 \min\{p^{X|Y}(0|y), p^{X|Y}(1|y)\}$ hold, it holds that  
\begin{eqnarray*}
I(X:Y)\leq 1-2\sum_y p^Y(y) \min \{p^{X|X}(0|y), p^{X|Y}(1|y)\}. 
\end{eqnarray*}
Using $p^{X|Y}(0|y)+p^{X|Y}(1|y)=1$, we obtain 
$-2 \min\{p^{X|Y}(0|y), p^{X|Y}(1|y)
\}=-1 +\left| p^{X|Y}(0|y) -p^{X|Y}(1|y) \right|$. 
Thus it holds that 
\begin{eqnarray*}
I(X:Y)
\leq \sum_y p^Y(y) \left| p^{X|Y}(0|y) -p^{X|Y}(1|y) \right|
=\sum_y \left|
p^{XY}(0,y)-p^{XY}(1,y)
\right|. 
\end{eqnarray*}  
The right-hand side of this inequality can be bounded as follows:
\begin{eqnarray*}
&&
\sum_y \left|
p^{XY}(0,y)-p^{XY}(1,y)
\right|
\\
&=&\sum_y \left| 
\sqrt{p^{XY}(0,y)}-\sqrt{p^{XY}(1,y)}\right|
\left| \sqrt{p^{XY}(0,y)}+\sqrt{p^{XY}(1,y)}
\right|
\\
&\leq&
\left(
\sum_y 
\left(
\sqrt{p^{XY}(0,y)}-\sqrt{p^{XY}(1,y)}\right)^2
\sum_y 
\left( \sqrt{p^{XY}(0,y)}+\sqrt{p^{XY}(1,y)}
\right)^2
\right)^{1/2}
\\
&=&
\left(1-2\sum_y p^{XY}(0,y)^{1/2} p^{XY}(1,y)^{1/2}
\right)^{1/2}
\left(1+2\sum_y p^{XY}(0,y)^{1/2} p^{XY}(1,y)^{1/2}
\right)^{1/2}
\\
&=&
\left(
1-4 \left(\sum_y p^{XY}(0,y)^{1/2} p^{XY}(1,y)^{1/2}\right)^2
\right)^{1/2}, 
\end{eqnarray*}
where we used the Cauchy-Schwarz inequality. 
\end{proof}
The following lemma plays an important role in 
relating probabilities in SIFT and CTRL with each other. 
\begin{lemma}\label{lemmakey}
For any (possibly unnormalized) vectors $|\phi_0\rangle, 
|\phi_1\rangle \in {\cal H}\otimes {\cal K}$, 
any bounded operator $X$ acting only on ${\cal H}$, any POVM $E=\{E_e\}$ acting only on 
${\cal K}$, it holds that 
\begin{eqnarray}
|\langle \phi_0 |X |\phi_1\rangle |
\leq \Vert X\Vert \sum_e  \langle \phi_0 |E_e|\phi_0\rangle^{1/2}
\langle \phi_1 |E_e|\phi_1 \rangle^{1/2}, 
\label{star}
\end{eqnarray}
where $\Vert\cdot \Vert$ is an operator norm defined by 
$\Vert X\Vert:=\sup_{\phi\neq 0}\frac{\Vert X|\phi\rangle \Vert}{\Vert |\phi\rangle\Vert}$. 
\end{lemma}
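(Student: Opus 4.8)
The plan is to reduce the bilinear form $\langle\phi_0|X|\phi_1\rangle$ to a sum over the POVM outcomes by inserting the resolution of identity $\sum_e E_e = {\bf 1}$, and then to control each term by Cauchy--Schwarz. Since $X$ acts only on $\mathcal H$ and each $E_e$ acts only on $\mathcal K$, the two operators commute, so I may write $X = X\sum_e E_e = \sum_e X E_e$ and hence $\langle\phi_0|X|\phi_1\rangle = \sum_e \langle\phi_0|X E_e|\phi_1\rangle$. The triangle inequality then gives $|\langle\phi_0|X|\phi_1\rangle| \le \sum_e |\langle\phi_0|X E_e|\phi_1\rangle|$, and it suffices to bound each summand by $\Vert X\Vert\,\langle\phi_0|E_e|\phi_0\rangle^{1/2}\langle\phi_1|E_e|\phi_1\rangle^{1/2}$.

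For a fixed $e$, write $E_e = E_e^{1/2}E_e^{1/2}$ (a positive operator has a positive square root, which again acts only on $\mathcal K$ and so commutes with $X$). Then $\langle\phi_0|X E_e|\phi_1\rangle = \langle E_e^{1/2}\phi_0\,|\,X\,|\,E_e^{1/2}\phi_1\rangle$, and by Cauchy--Schwarz together with the definition of the operator norm,
\begin{eqnarray*}
|\langle E_e^{1/2}\phi_0\,|\,X\,|\,E_e^{1/2}\phi_1\rangle|
&\le& \Vert E_e^{1/2}\phi_0\Vert\,\Vert X\,E_e^{1/2}\phi_1\Vert
\ \le\ \Vert X\Vert\,\Vert E_e^{1/2}\phi_0\Vert\,\Vert E_e^{1/2}\phi_1\Vert.
\end{eqnarray*}
Finally $\Vert E_e^{1/2}\phi_j\Vert^2 = \langle\phi_j|E_e|\phi_j\rangle$, so $\Vert E_e^{1/2}\phi_j\Vert = \langle\phi_j|E_e|\phi_j\rangle^{1/2}$. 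Substituting and summing over $e$ yields the claimed inequality (\ref{star}).

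I do not anticipate a serious obstacle here; the only points requiring a little care are that the sum over $e$ may be infinite (so one should note convergence, which follows since the partial sums are monotone and bounded by $\Vert X\Vert\,\Vert\phi_0\Vert\,\Vert\phi_1\Vert$ via one more Cauchy--Schwarz on the $e$-sum), and that $X$ need not be self-adjoint or positive — but the argument above uses only the operator norm bound $\Vert X E_e^{1/2}\phi_1\Vert \le \Vert X\Vert\,\Vert E_e^{1/2}\phi_1\Vert$, which holds for any bounded $X$, so no such hypothesis is needed. The commutativity of $X$ with the $E_e$, coming from the tensor-product structure $X = \hat X\otimes{\bf 1}_{\mathcal K}$ and $E_e = {\bf 1}_{\mathcal H}\otimes\hat E_e$, is what makes the factorization $\langle\phi_0|X E_e|\phi_1\rangle = \langle E_e^{1/2}\phi_0|X|E_e^{1/2}\phi_1\rangle$ legitimate and is the structural heart of the argument.
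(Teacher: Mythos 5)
Your proposal is correct and follows essentially the same route as the paper: insert $\sum_e E_e={\bf 1}$, use the commutativity of $X$ with $E_e^{1/2}$ coming from the tensor-product structure to write each term as $\langle E_e^{1/2}\phi_0|X|E_e^{1/2}\phi_1\rangle$, and bound it by Cauchy--Schwarz together with the operator-norm inequality. The only difference is cosmetic (you apply the norm bound to $\Vert X E_e^{1/2}\phi_1\Vert$ directly, while the paper first keeps $\langle\phi_1|E_e^{1/2}X^*XE_e^{1/2}|\phi_1\rangle^{1/2}$), and your remark on convergence of the $e$-sum is a harmless extra.
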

\begin{proof}
Using the commutativity between $E_e^{1/2}$ and $X$, we obtain 
\begin{eqnarray*}
|\langle \phi_0 |X|\phi_1 \rangle |
=|\sum_e \langle \phi_0 |XE_e |\phi_1 \rangle |
=|\sum_e \langle \phi_0 |E_e^{1/2} X E_e^{1/2} |\phi_1 \rangle |.
\end{eqnarray*}
We further obtain 
\begin{eqnarray*}
|\langle \phi_0|X|\phi_1 \rangle |
&\leq& \sum_e |\langle \phi_0 |E_e^{1/2}XE_e^{1/2}|\phi_1 \rangle |
\\
&\leq & 
\sum_e \langle \phi_0 |E_e |\phi_0\rangle^{1/2}
\langle \phi_1 |E_e^{1/2}X^* X E_e^{1/2}|\phi_1 \rangle^{1/2}
\\
&\leq &
\sum_e \langle \phi_0 |E_e|\phi_0 \rangle^{1/2}
\langle \phi_1|E_e |\phi_1\rangle^{1/2}\Vert X\Vert, 
\end{eqnarray*}
where we used the Cauchy-Schwarz inequality to derive the second line 
and the definition of the operator norm to derive the third line. 
\end{proof}
\begin{proof}(Proof of Theorem 1)
We apply Lemma \ref{lemma1} to $p^{AE}_{SIFT}(z,e)$ in order to 
bound $I(A:E)$. 
To bound $p^{AE}_{SIFT}(z,e)$ by $P_{SIFT}$ and $P_{CTRL}$, 
we compare this quantity with another probability 
$p_0(z,e)$ defined by 
$p_0(z,e):=\langle \Psi |U^* Z_z E_e U |\Psi\rangle$. 
Using $Z_z +Z_{z\oplus 1}={\bf 1}$, we obtain  
\begin{eqnarray*}
U Z_z &=& Z_z U Z_z + Z_{z\oplus 1} U Z_z \\ 
&=& Z_z U +C_z, 
\end{eqnarray*}
where $C_z := Z_{z \oplus 1} U Z_z -Z_z U Z_{z\oplus 1}$. 
Thus it holds that   
\begin{eqnarray*}
p_{SIFT}^{AE}(z,e) 
&=&\langle \Psi | (U^* Z_z + C^*_z) E_e (Z_z U + C_z)|\Psi\rangle
\\
&=& p_0 (z,e) +\langle \Psi |U^* Z_z E_e C_z |\Psi \rangle 
+\langle \Psi | C^*_z Z_z E_e U |\Psi \rangle 
+\langle \Psi |C^*_z E_e C_z |\Psi\rangle. 
\end{eqnarray*}
We obtain 
\begin{eqnarray*}
&&\left| p_{SIFT}^{AE}(z,e)- p_0 (z,e)\right|
= \left| 
\langle \Psi |U^* Z_z E_e C_z |\Psi \rangle 
+\langle \Psi | C^*_z Z_z E_e U |\Psi \rangle 
+\langle \Psi |C^*_z E_e C_z |\Psi\rangle
\right|
\\
&\leq &
\langle \Psi|U^* Z_z E_e U|\Psi\rangle^{1/2}\langle \Psi |C_z^* E_e C_z|\Psi\rangle^{1/2}
+\langle \Psi |C_z^* E_e C_z |\Psi \rangle^{1/2} 
\langle \Psi | U^* Z_z E_e U |\Psi \rangle^{1/2} 
\\
&&
+\langle \Psi | C_z^* E_e C_z|\Psi \rangle 
\\
&=& 2 p_0 (z,e)^{1/2} \langle \Psi | C_z^* E_e C_z|\Psi \rangle^{1/2} 
+ \langle \Psi |C_z^* E_e C_z|\Psi \rangle, 
\end{eqnarray*}
where we used the triangular inequality and the Cauchy-Schwarz inequality. 
Because $|a-b|\leq c$ implies $|\sqrt{a}-\sqrt{b}|\leq \sqrt{c}$ for positive $a,b$ and $c$, 
it holds that 
\begin{eqnarray}
\left| \sqrt{p_{SIFT}^{AE}(z,e)}- \sqrt{p_0 (z,e)}\right|
\leq \left(
2 p_0 (z,e)^{1/2} \langle \Psi | C_z^* E_e C_z|\Psi \rangle^{1/2} 
+ \langle \Psi |C_z^* E_e C_z|\Psi \rangle
\right)^{1/2}. 
\label{sqrt}
\end{eqnarray}
We apply Lemma \ref{lemmakey} to 
$|\phi_0\rangle =Z_0 U|\Psi \rangle$, $|\phi_1 \rangle =Z_1 U|\Psi\rangle$ 
and $X=|0\rangle \langle 1|\otimes {\bf 1}_{\cal K}$. 
The left-hand side of (\ref{star}) can be bounded as 
\begin{eqnarray}
|\langle \phi_0 |X |\phi_1 \rangle |
&=&| \langle \Psi | U^* (|0\rangle \langle 1|\otimes {\bf 1}_{{\cal K}}) U |\Psi\rangle |
\nonumber 
\\
&\geq & 
\frac{\langle \Psi | U^* (|0\rangle \langle 1| \otimes {\bf 1}_{{\cal K}} ) U |\Psi \rangle 
+ \langle \Psi | U^* (|1\rangle \langle 0 | \otimes {\bf 1}_{{\cal K}}) U |\Psi \rangle }{2}
\nonumber 
\\
&=& \frac{1}{2}- \langle \Psi | U^* X_- U |\Psi \rangle 
\nonumber 
\\
&=& \frac{1}{2}- P_{CTRL}.
\label{sahen}
\end{eqnarray}
The right-hand side of (\ref{star}) becomes
\begin{eqnarray}
&& 
\Vert X\Vert \sum_e  \langle \phi_0 |E_e|\phi_0\rangle^{1/2}
\langle \phi_1 |E_e|\phi_1 \rangle^{1/2}
=\sum_e p_0(0,e)^{1/2} p_0(1.e)^{1/2} 
\nonumber 
\\
&\leq &
\sum_e \left(
p^{AE}_{SIFT}(0,e)^{1/2}
+\left(2 p_0(0,e)^{1/2} \langle \Psi | C_0^* E_e C_0|\Psi \rangle^{1/2}
+\langle \Psi |C_0^* E_e C_0|\Psi \rangle \right)^{1/2}
\right)
\nonumber \\
&&
\cdot
\left(
p^{AE}_{SIFT}(1,e)^{1/2}
+\left(2 p_0(1,e)^{1/2} \langle \Psi | C_1^* E_e C_1|\Psi \rangle^{1/2}
+\langle \Psi |C_1^* E_e C_1|\Psi \rangle \right)^{1/2}
\right),  
\label{tochu}
\end{eqnarray}
where we used (\ref{sqrt}). 
By using the Cauchy-Schwarz inequality, 
we can further bound the above inequality as
\begin{eqnarray*}
&&(\ref{tochu})
\leq 
\sum_e p^{AE}_{SIFT}(0,e)^{1/2}p^{AE}_{SIFT}(1,e)^{1/2}
\\
&&
+p^A_{SIFT}(0)^{1/2}
\left(
\sum_e (2p_0(1,e)^{1/2}\langle \Psi |C_1^* E_e C_1|\Psi\rangle^{1/2}
+\langle \Psi |C_1^* E_e C_1|\Psi\rangle)
\right)^{1/2}
\\
&&
+
p^A_{SIFT}(1)^{1/2}
\left(
\sum_e (2p_0(0,e)^{1/2}\langle \Psi |C_0^* E_e C_0|\Psi\rangle^{1/2}
+\langle \Psi |C_0^* E_e C_0|\Psi\rangle)
\right)^{1/2}
\\
&&
+\left(
\sum_e (2p_0(1,e)^{1/2}\langle \Psi |C_1^* E_e C_1|\Psi\rangle^{1/2}
+\langle \Psi |C_1^* E_e C_1|\Psi\rangle)
\right)^{1/2}
\\
&&\times
\left(
\sum_e (2p_0(0,e)^{1/2}\langle \Psi |C_0^* E_e C_0|\Psi\rangle^{1/2}
+\langle \Psi |C_0^* E_e C_0|\Psi\rangle)
\right)^{1/2}.
\end{eqnarray*}
The terms $
\sum_e (2p_0(0,e)^{1/2}\langle \Psi |C_0^* E_e C_0|\Psi\rangle^{1/2}
+\langle \Psi |C_0^* E_e C_0|\Psi\rangle)$
and $
\sum_e (2p_0(1,e)^{1/2}\langle \Psi |C_1^* E_e C_1|\Psi\rangle^{1/2}
+\langle \Psi |C_1^* E_e C_1|\Psi\rangle)$ are bounded as 
\begin{eqnarray*}
&&
\sum_e (2p_0(0,e)^{1/2}\langle \Psi |C_0^* E_e C_0|\Psi\rangle^{1/2}
+\langle \Psi |C_0^* E_e C_0|\Psi\rangle)
\\
&&
\leq 2p^A_0(0)^{1/2} \langle \Psi |C_0^* C_0|\Psi \rangle^{1/2}
+\langle \Psi |C_0^* C_0|\Psi \rangle
\\
&&
\sum_e (2p_0(1,e)^{1/2}\langle \Psi |C_1^* E_e C_1|\Psi\rangle^{1/2}
+\langle \Psi |C_1^* E_e C_1|\Psi\rangle)
\\
&&
\leq 2p^A_0(1)^{1/2} \langle \Psi |C_1^* C_1|\Psi \rangle^{1/2}
+\langle \Psi |C_1^* C_1|\Psi \rangle, 
\end{eqnarray*}
where 
$p_0^A(z):=\sum_e p_0(z,e)$ and 
we used the Cauchy-Schwarz inequality and the relation 
$\sum_e E_e={\bf 1}_{{\cal K}}$. 
Because $C_z^* C_z 
=Z_z U^* Z_{z\oplus 1} U Z_z + Z_{z\oplus 1} U^* Z_z U Z_{z\oplus 1}
$ holds, 
we have, for $z=0,1$, 
\begin{eqnarray*}
\langle \Psi |C_z^* C_z|\Psi \rangle 
= P_{SIFT}. 
\end{eqnarray*}
Thus we obtain 
\begin{eqnarray*}
&& 
\Vert X\Vert \sum_e  \langle \phi_0 |E_e|\phi_0\rangle^{1/2}
\langle \phi_1 |E_e|\phi_1 \rangle^{1/2}
\leq \sum_e p^{AE}_{SIFT}(0,e)^{1/2}p^{AE}_{SIFT}(1,e)^{1/2}
\\
&&
+p_{SIFT}^A(0)^{1/2}(2p_0^A(1)^{1/2}P_{SIFT}^{1/2}+P_{SIFT})^{1/2}
\\
&&
+p_{SIFT}^A(1)^{1/2}(2p_0^A(0)^{1/2}P_{SIFT}^{1/2}+P_{SIFT})^{1/2}
\\
&&
+(2p^A_0(1)^{1/2}P_{SIFT}^{1/2}+P_{SIFT})^{1/2}
(2p^A_0(0)^{1/2}P_{SIFT}^{1/2}+P_{SIFT})^{1/2}. 
\\
&\leq &
\sum_e p^{AE}_{SIFT}(0,e)^{1/2}p^{AE}_{SIFT}(1,e)^{1/2}
\\
&&
+(2p^A_0(1)^{1/2}P_{SIFT}^{1/2} 
+2p^A_0(0)^{1/2}P_{SIFT}^{1/2}+2 P_{SIFT})^{1/2}
\\
&& +
\frac{1}{2} 
\left( 
2p^A_0(1)^{1/2}P_{SIFT}^{1/2} 
+2p^A_0(0)^{1/2}P_{SIFT}^{1/2}+2 P_{SIFT}\right), 
\end{eqnarray*}
where we used the Cauchy-Schwarz inequality again. 
Because $P_{SIFT}\leq P_{SIFT}^{1/2}$ and $p^A_0(0), p^A_0(1)\leq 1$ hold, 
it holds that  
\begin{eqnarray}
&&
\Vert X\Vert \sum_e  \langle \phi_0 |E_e|\phi_0\rangle^{1/2}
\langle \phi_1 |E_e|\phi_1 \rangle^{1/2}
\nonumber \\
&\leq & \sum_e p^{AE}_{SIFT}(0,e)^{1/2}p^{AE}_{SIFT}(1,e)^{1/2} 
+
\sqrt{6}P_{SIFT}^{1/4} +3 P_{SIFT}^{1/2}
\nonumber 
\\
&
\leq &
\sum_e p^{AE}_{SIFT}(0,e)^{1/2}p^{AE}_{SIFT}(1,e)^{1/2} + 
 6 P_{SIFT}^{1/4}, 
\label{uhen} 
\end{eqnarray}
where we used $\sqrt{6}+3 < 6$. 
(Although the above inequality can be slightly improved, 
we do not treat it here as it is not important.) 
Thus (\ref{sahen}), (\ref{uhen}) and Lemma \ref{lemmakey} derive 
\begin{eqnarray*}
\frac{1}{2}-P_{CTRL} - 6 P_{SIFT}^{1/4} \leq 
\sum_e p^{AE}_{SIFT}(0,e)^{1/2}p^{AE}_{SIFT}(1,e)^{1/2}.
\end{eqnarray*} 
Now we can apply Lemma \ref{lemma1} to obtain
\begin{eqnarray*}
I(A:E)&\leq& 2 \sqrt{(P_{CTRL}+6P_{SIFT}^{1/4}) -(P_{CTRL}+6P_{SIFT}^{1/4})^2}
\\
&\leq &
2 \sqrt{P_{CTRL}+6P_{SIFT}^{1/4}
}. 
\end{eqnarray*}
\end{proof}
\section{Summary}
In this paper, treating the quantum key distribution 
protocol with classical Alice, we obtained a trade-off relationship between  
information gained by Eve and the disturbance observed by Alice and Bob. 
Our theorem provides a generalization of the robustness result 
obtained thus far. 
Moreover it guarantees that information gained by Eve is 
small if both of two error probabilities observed by the legitimate users 
are sufficiently small. 
Applying the inequality to the full protocol in order to 
examine its security is an important future problem. 
\\
{\bf Acknowledgments:} 
I would like to thank Prof. Hideki Imai for his encouragements.


\begin{thebibliography}{9}
\bibitem{BKM}
M.~Boyer, D.~Kenigsberg, and T.~Mor, 
{\em Phys. Rev. Lett.} {\bf 99} (2007) 140501.

\bibitem{Semi}
M.~Boyer, R.~Gelles, D.~Kenigsberg, and T.~Mor, 
{\em Phys. Rev. A} {\bf 79} (2009) 032341.

\bibitem{Zou}
X.~Zou, D.~Qiu, L.~Li, L.~Wu, and L.~Li, {\em Phys. Rev. A} 
{\bf 79} (2009) 052312. 

\bibitem{BoyerMor}
M.~Boyer and T.~Mor, {\em Phys. Rev. A} {\bf 83} (2011) 046301.

\bibitem{ZQ}
X.~Zou and D.~Qiu, 
{\em Phys. Rev. A} {\bf 83} (2011) 046302.

\bibitem{photon}
M.~Boyer and T.~Mor, arXiv:1012.2418. 

\bibitem{Biham}
E.~Biham, M.~Boyer, P.~O.~Boykin, T.~Mor, and V.~Roychowdhury, 
{\em Journal of cryptology} {\bf 19} (2006) 381. 

\bibitem{Hayashi}
M.~Hayashi, 
{\em Phys. Rev. A} {\bf 74} (2006) 022307. 

\bibitem{Boykin}P.~O.~Boykin and V.~P.~Roychowdhuri, 
{\em Quantum Information and Computation} {\bf 5} (2005) 596.

\bibitem{Miyainfo}
T.~Miyadera and H.~Imai, 
{\em Phys. Rev. A} {\bf 73} (2006) 042317.

\bibitem{Fuchs}
C.~A.~Fuchs and J.~V.~D.~Graaf, 
{\em IEEE Transactions on Information Theory} {\bf 45} (1999) 1216. 

\end{thebibliography}
\end{document}